\newcommand{\mC}{\mathcal{C}}
\newcommand{\mL}{\mathcal{L}}
\newcommand{\mU}{\mathcal{U}}
\newcommand{\mX}{\mathcal{X}}
\newcommand{\R}{\mathbb{R}}
\newcommand{\mbE}{\mathbb{E}}
\newcommand{\mbP}{\mathbb{P}}
\newcommand{\mO}{\mathcal{O}}
\newcommand{\tmO}{\tilde{\mathcal{O}}}
\newcommand{\tx}{\tilde{x}}
\newcommand{\innerp}[1]{\langle #1 \rangle}
\newcommand{\prob}[1]{\mathbb{P}\left( #1 \right)}
\renewcommand{\top}{\mathsf{T}}
\newtheorem{definition}{Definition}[section]
\newtheorem{thm}{Theorem}
\newtheorem{assumption}{Assumption}
\newtheorem{example}{Example}
\newtheorem{remark}{Remark}[section]
\title{\LARGE \bf
Model Predictive Control with High-Probability Safety Guarantee for Nonlinear Stochastic Systems 
}
\author{Zishun Liu$^{1}$, Liqian Ma$^{1}$ and Yongxin Chen$^{1}$
\thanks{$^{1}$The authors are with Georgia Institute of Technology, Atlanta, GA 30332. 
        {\tt\small \{zliu910\}\{mlq\}\{yongchen\}@gatech.edu}}%
}
\begin{document}
\maketitle
\thispagestyle{empty}
\pagestyle{empty}

\begin{abstract}
We present a model predictive control (MPC) scheme for nonlinear stochastic systems that renders a safety guarantee with high probability. Unlike most existing stochastic MPC schemes, our method adopts a set erosion strategy that converts the probabilistic safety constraint into a tractable deterministic safety constraint on a smaller safe set under deterministic dynamics. The latter can be addressed using any scalable, off-the-shelf, deterministic MPC algorithm. 
The key to the effectiveness of the proposed MPC is a tight bound on the stochastic fluctuation of a stochastic trajectory around its nominal version. With this tight bound, our method can guarantee a safety bound with a high probability level (e.g., 99.99\%), making it particularly suitable for safety-critical applications involving complex nonlinear dynamics. 
Finally, we present a rigorous analysis of the closed-loop dynamics, which resolves a subtle theoretical gap in the existing literature.
Numerical experiments are provided to validate the effectiveness of the proposed MPC method.
\end{abstract}

\section{Introduction}
\label{sec:introduction}
 
Safety is a fundamental requirement in real-world systems. As many real-world systems are not naturally safe, an important task is to synthesize a controller so that the controlled dynamics satisfy the safety requirements \cite{lavaei2022automated}. For deterministic systems with bounded uncertainties, safe controller synthesis seeks to keep the closed-loop system invariant in the safe region in the worst-case scenarios \cite{kouvaritakis2015model}. This task has been addressed by a variety of methods, including control barrier functions (CBF) \cite{ames2019control}, reachability analysis \cite{bajcsy2019efficient}, and model predictive control (MPC).  Among all these methods, MPC has been flourishing for decades due to its ability to effectively cope with complex system dynamics and constraints \cite{morari1999model}. For deterministic systems, MPC ensures safety by formulating the worst-case safety condition into robustness constraints, and has shown exceptional success in various applications like autonomy \cite{lew2024risk,knoedler2025safety}.

However, many real-world systems are also prone to stochastic disturbances \cite{santoyo2021barrier}, which are usually unbounded (e.g., Gaussian noise) or rarely reach the worst case. Consequently, the objective of safe controller synthesis becomes to ensure the entire stochastic trajectory satisfies safety constraints \textit{with high probability}. Existing efforts have been made to adapt the aforementioned deterministic methods to this problem, including stochastic CBF \cite{fushimi2025safety}, stochastic reachable tube \cite{APV-MKO:21}, and stochastic MPC. In this paper, we focus on stochastic MPC \cite{liniger2017racing,yin2025safe}.

In general, stochastic MPC formulates safety conditions as chance constraints that demand satisfaction with some given probability level.
Methods such as sample-based MPC \cite{6426462,wu2022robust} and time-varying linearized MPC \cite{yu2023stochastic} have been applied in practice. Besides, given fruitful results in deterministic MPC, a natural attempt to handle the chance constraints is to convert them into deterministic ones. 
When the system is linear with Gaussian or sub-Gaussian noise, the chance constraints can be converted to deterministic constraints on the mean and covariance of the trajectory \cite{farina2016stochastic,ao2025stochastic}. However, this strategy is restricted to linear systems. For nonlinear dynamics, an effective strategy is called set erosion \cite{liu2024safety}, which converts the chance constraints to deterministic safety constraints on the nominal trajectory (the trajectory with no disturbance) in an eroded safe region. Existing works such as \cite{hewing2020recursively,kohler2024predictive} have proposed tractable stochastic MPC schemes based on this strategy, but they have some limitations. First, many of them only provide probabilistic safety insurance at each single time instant, but not \textit{for the entire trajectory}. Second, the erosion depth of the safe regions derived in existing MPC schemes is overly conservative when the safety level is high (e.g., safe with probability $>99.9\%$).

In this paper, we propose an MPC framework for stochastic nonlinear systems with a high-probability safety guarantee. The proposed MPC framework is based on the tractable MPC in \cite{kohler2024predictive}, which
leverages the set erosion strategy that converts the chance constraints to deterministic safety constraints. Compared to \cite{kohler2024predictive}, we significantly improve the feasibility of the proposed MPC scheme when the probability level of safety is high (e.g., 99.99\%). This is achieved by adopting a sufficiently tight bound on the erosion depth of the safe set, which relaxes the converted deterministic safe constraints while theoretically guaranteeing the trajectory-level safety. Moreover, we justify that the bound on the set erosion depth depends on the Lipschitz constant of the open-loop system, and the change of the closed-loop Lipschitz constant due to the implicit feedback of the MPC is irrelevant. This property is crucial since it is difficult to track the changing closed-loop Lipschitz constant in practice.
Note that in previous work along this line \cite{kohler2024predictive}, whether the suitable set erosion depth depends on the closed-loop Lipschitz constant or the open-loop one is unclear.

\textit{Notations.} The set of positive integers is denoted by $\mathbb{N}_{+}$. We use $\|\cdot\|$ to denote $\ell_2$ norm. Given two sets $A,B\subseteq \R^n$, $A\oplus B$ denotes their Minkowski sum, and their Minkowski difference is denoted by $A\ominus B$. We use $\mbE$ to denote expectation, $\mbP$ to denote probability, $\mathcal{N}(\mu,\Sigma)$ to denote Gaussian distribution, and $\mathcal{B}^n(r,y)$ to denote the ball $\{x\in\R^n: \|x-y\|\leq r\}$. We say a random variable $X\sim \mathcal{G}$ if $X$ is i.i.d. sampled from the distribution $\mathcal{G}$. For a system state $x_t$, its trajectory is denoted by $\{x_t\}$.

\section{Problem Statement} \label{sec: problem}
To begin with, we introduce the safe control problem of stochastic systems and review the standard stochastic MPC method for it.

\subsection{Safety of Stochastic Systems}
Consider the discrete-time stochastic system
\begin{equation}\label{sys: d-t ss}
     X_{t+1}=f(X_t,u_t)+w_t,
\end{equation}
where $X_t\in \R^n$ is the system state, $u_t\in\mU\subset\R^p$ is an open-loop control input chosen from an input set $\mU$, $w_t\in\R^n$ is a stochastic disturbance, and $f: \R^n\times\R^p\times\mathbb{N}_+\to\R^n$ is a nonlinear transition function. In this paper, we impose the Lipschitz nonlinearity condition on the open-loop system. 

\begin{assumption}\label{ass: Lipschitz f}
    There exists $L\geq0$ such that $\|f(x,u) - f(y,u)\| \leq L\|x-y\|$ holds for every $x,y$ in the domain of $f$ and every $u\in\mU$.
\end{assumption}
\begin{remark} \label{remark}
    For nonlinear systems, it is sometimes difficult to capture their global Lipschitz constant, or the global Lipschitz constant can be overly large. In this case, one can set $u_t=K(X_t)+\tilde{u}_t$ and bound the Lipschitz constant of $\tilde{f}(X_t,\tilde{u}_t)=f(X_t,K(X_t)+\tilde{u}_t)$ in certain subsets of the state space. Some methods of designing the feedback $K(X_t)$ include control contraction metrics \cite{manchester2017control}, learning-based methods \cite{9683354}, etc.
\end{remark}

We model $w_t$ as a \textit{sub-Gaussian} disturbance, which covers a wide range of noise distributions such as Gaussian, uniform, and any zero-mean noise with bounded support \cite{vershynin2018high}.
\begin{assumption}\label{ass: bounded sigma}
   For the system~\eqref{sys: d-t ss}, $w_t\sim subG(\sigma^2)$ with some finite $\sigma>0$ for $\forall t\geq0$. That is, $\mbE(w_t)=0$ and
$\mbE_{w_t}\left(e^{\lambda \innerp{l,w_t}}\right)\leq e^{\frac{\lambda^2\sigma^2}{2}}$ holds for all $\lambda\in\R$ and $l$ on the $\ell_2$-unit sphere.
\end{assumption}

Our goal is to design a controller for the stochastic system \eqref{sys: d-t ss} with safety guarantee. Unlike deterministic safety in the sense of forward invariance \cite{ames2019control}, the safety of stochastic systems is formalized as the following chance constraint to better capture the effect of stochastic noise.
\begin{definition} \label{def: sto safety}
    Consider stochastic system \eqref{sys: d-t ss} with input set $\mathcal{U}\subseteq\R^p$. Given a time horizon $T$, a control strategy $u_t\in\mU$, a probability level $\delta\in(0,1)$, a safe set $\mC\subseteq\mathbb{R}^n$, and an initial configuration $\mathcal{X}_0\subseteq\R^n$, the system is said to be \textit{safe with $1-\delta$ guarantee} during $t\leq T$ if $\mX_0\subseteq\mC$ and $ X_0\in\mathcal{X}_0~ \Rightarrow~\prob{X_t\in\mC,~ \forall t\leq T}\geq 1-\delta.$
\end{definition}
\smallskip

With the definition of stochastic safety, our objective can be stated as developing a controller that provides $1-\delta$ safety guarantee for the stochastic system \eqref{sys: d-t ss}. Given the safety probability level $1-\delta$, terminal time $T$, cost functions $\mathcal{L}_t(\cdot,\cdot)$, terminal cost $\Phi(\cdot)$, safe set $\mC$ and initial state $X_0\in\mX_0\subseteq\mC$, the control task is formulated as follows
\begin{subequations} \label{origin opt ctrl}
      \begin{align}
        \min_{u_0,\dots, u_{T-1}} J&=  \mbE\left(\sum_{t=0}^{T-1}\mL_t(X_t,u_t)+\Phi(X_T)\right)  \\
        \mbox{s.t.}\quad & X_{t+1}=f(X_{t},u_t)+w_t \\
        & \prob{X_t\in\mC,~\forall t\leq T}\geq 1-\delta \\
        & u_{t}\in\mU,~\forall t=0,\dots,T-1.
    \end{align}
 \end{subequations}

 Unfortunately, it can be intractable to directly solve Problem \eqref{origin opt ctrl}. It requires evaluating the stochastic safety constraint (\ref{origin opt ctrl}c) over the entire time horizon, which is typically implemented by sampling-based methods and thus is computationally heavy \cite{kohler2024predictive}. To overcome this challenge, we need to convert \eqref{origin opt ctrl} to a tractable control scheme.

\subsection{Stochastic MPC with Chance Constraint}
 
 In this paper, we focus on handling \eqref{origin opt ctrl} through stochastic MPC, as it systematically seeks control objectives and probabilistic constraint satisfaction \cite{7740982}. At each time $t$, given a prediction window $m$, a typical stochastic MPC predicts an $m$-length trajectory $\{z_{t+k|t}\}$, $k=0,\dots,m$ by solving the stochastic optimization problem with system dynamic constraints $z_{t+k+1|t}=f(z_{t+k|t},u_{t+k|t})$$ + w_{t+k|t}$, where $w_{t+k|t}\sim subG(\sigma^2)$, and chance constraints for safety guarantee. 
However, it is challenging to construct a proper chance constraint for the predicted trajectory. One such construction in \cite{hewing2020recursively,kohler2024predictive} is $\prob{\hat{x}_{k|t}\in\mC}\geq 1-\delta$, $\forall k\leq T$, where $\hat{x}_{k+1|t}=f(\hat{x}_{k|t},u_{t+k|t})+w_{t+k|t}$ with $\hat{x}_0=x_0$. However, this constraint is imposed on each single time point of the predicted trajectory, rather than the \textit{entire actual trajectory}, so it fails to provide a $1-\delta$ safety guarantee for the system. Moreover, chance constraints for nonlinear systems cannot be directly handled by typical statistical methods except Monte-Carlo sampling, which is computationally intractable.

\section{Stochastic MPC with High-Probability Safety Guarantee} \label{sec: mpc}
Departing from most existing stochastic MPC methods, we adopt a strategy that converts the chance constraint into a deterministic safety constraint and reduce the stochastic MPC problem into a deterministic one. In particular, following \cite{kohler2024predictive,liu2024safety}, we leverage a technique known as set erosion that ensures $1-\delta$ probabilistic safety guarantee by ensuring a deterministic safety guarantee on an eroded safe set.

\subsection{Tractable Set Erosion-Based MPC} \label{sec: erosion}
For a stochastic trajectory of \eqref{sys: d-t ss} with initial state $X_0$ and input $u_t$, its associated nominal trajectory $\{x_t\}$ is given as
\begin{equation}\label{eq: nominal}
    x_{t+1}=f(x_t,u_t),~ x_0=X_0.
\end{equation}
Intuitively, a stochastic trajectory fluctuates around its associated nominal trajectory with high probability. Therefore, if we erode the safe set $\mC$ with a suitable depth $r_{\delta,t}$ to obtain an eroded subset $\tilde{\mC}_t=\mC\ominus\mathcal{B}^n(r_{\delta,t},0)$ and control the nominal trajectory $\{x_t\}$ to stay within $\tilde{\mC}_t$, then the associated stochastic trajectory $\{X_t\}$ is expected to stay safe with high probability. This strategy is termed \textit{set erosion} \cite{liu2024safety}. 

Finding an appropriate erosion depth $r_{\delta,t}$ is the key of the set erosion strategy. A natural choice of $r_{\delta,t}$ is the \textit{radius of probabilistic tube (PT)}. Given a finite time horizon $[0,\,T]$ and a probability level $\delta\in(0,1)$, a curve $r_{\delta,t}:(0,1)\times[0,T]\to\R_{\geq0}$ is said to be the \textit{radius of PT}, if for any associated trajectories $\{X_t\}$ and $\{x_t\}$:
\begin{equation} \label{eq: def PT}
    \prob{\|X_t-x_t\|\leq r_{\delta,t},~\forall t\leq T}\geq 1-\delta.
\end{equation}     

It should be noted that $r_{\delta,t}$ is a bound applied to the \textit{entire stochastic trajectory}, which means the probability that \textit{none of} the state on the trajectory violates the bound is at least $1-\delta$. It can be shown that when the erosion depth is chosen as the radius of PT, the safety of $\{x_t\}$ on $\mC\ominus\mathcal{B}^n(r_{\delta,t},0)$ yields the satisfaction of the chance constraint in Definition \ref{def: sto safety} for the associated $\{X_t\}$ \cite{liu2025safety}. 

The set erosion strategy has been considered in several existing works \cite{kohler2024predictive,hewing2020recursively,liu2025safety}, among which our MPC framework is based on the tractable stochastic MPC proposed in \cite[Section IV]{kohler2024predictive}. Given the prediction window $m$, cost functions $\mL_t$, terminal cost $\Phi$ and a terminal state set $\mX_f$, the proposed MPC solves the following optimization problem at each time $t$ with variable $\bm{u}_t=\{u_{t+k|t}\}$, $k=0,\dots,m-1$. 
\begin{subequations}\label{eq: mpc scheme}
    \begin{align}
        &\min_{\bm{u}_t} J_t= \sum_{k=0}^{m-1} \mL_{t+k}(z_{t+k|t},u_{t+k|t})+\Phi(z_{t+m|t}) \\
        \mbox{s.t.}\quad & z_{t|t}=X_t \\
        &z_{t+k+1|t}=f(z_{t+k|t},u_{t+k|t}) \\
        & \tx_{t|t}=x_t \\
        & \tx_{t+k+1|t}=f(\tx_{t+k|t},u_{t+k|t}) \\
        & \tx_{t+k|t}\in\mC\ominus\mathcal{B}^n\left(r_{\delta,t+k},0\right) \\
        & \tx_{t+m|t}\in \mX_f \\
        & u_{t+k|t}\in\mU,~ k=0,\dots,m-1,
    \end{align}
\end{subequations}
 where $\{z_{t+k|t}\}$ is the prediction of the actual trajectory, $\{\tx_{t+k|t}\}$ is the prediction of the nominal trajectory, $J_t$ is the cost function of $\{z_{t+k|t}\}$ and $\bm{u}_t$, and $r_{\delta,t}$ is the radius of PT \eqref{eq: def PT}. $\{z_{t+k|t}\}$ is calculated in a noise-free environment so that the optimization problem is deterministic and can be solved efficiently.  
 The terminal state constraint (\ref{eq: mpc scheme}f) is added to ensure the recursive feasibility of \eqref{eq: mpc scheme}.
 At each time $t$, the MPC scheme receives the current state $X_t$, the nominal state $x_t$, and the offline computed bound $r_{\delta,\tau}$, $\tau=0,\dots,T$. Then the problem \eqref{eq: mpc scheme} is initialized by these data and solved. Suppose that a solution $\bm{u}_t^*$ is found, then its first element $u_{t|t}^*$ is chosen as the control input $u_t$ to $X_t$ and $x_t$ simultaneously. When $t$ is close to $T$, the prediction window is reduced to ensure that $t+m\leq T$. An illustration of the presented MPC is shown in Figure \ref{fig: mpc}. We point out that $u_t$ is a closed-loop policy since the state $X_t$ is used in determining the control.

\begin{figure}[t] 
	\centering
  \includegraphics[width =0.9\linewidth]{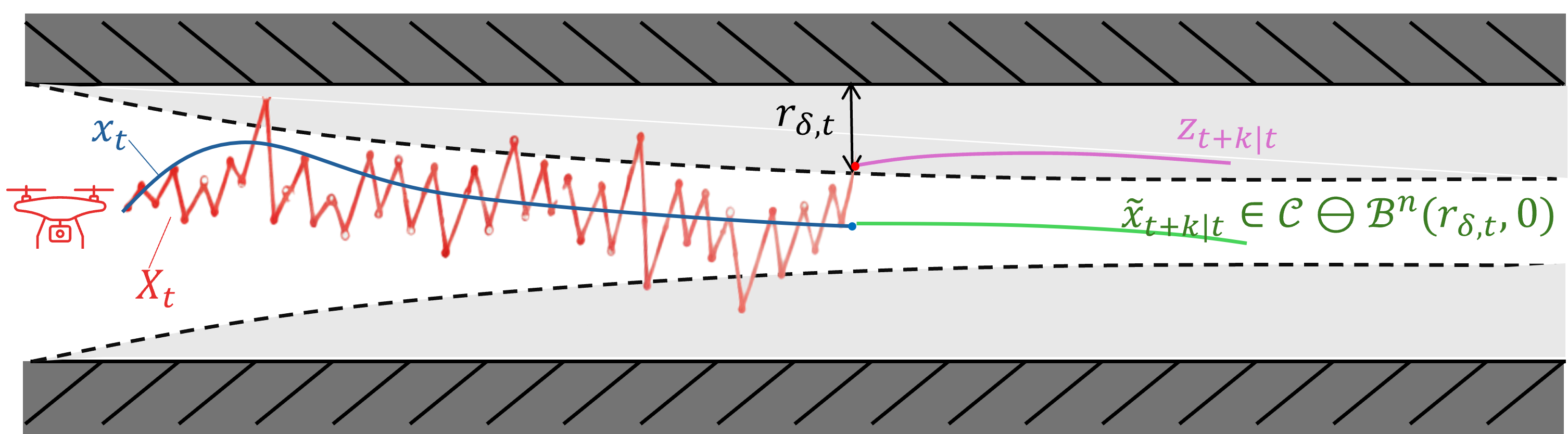}
	\caption{An illustration of the presented stochastic MPC scheme. Blue curve: current nominal trajectory $\{x_t\}$. Green curve: prediction of $\{x_t\}$. Red curve: Current actual trajectory $\{X_t\}$. Purple curve: Prediction of $\{X_t\}$. The gray areas are the eroded part of the safe set $\mC$. 
 }
 \label{fig: mpc}
 \end{figure}
 
By utilizing the set erosion strategy, the chance constraint is converted to (\ref{eq: mpc scheme}f), which is a deterministic constraint on the predicted nominal trajectory $\{\tx_{t+k|t}\}$ given the current information. Therefore, \eqref{eq: mpc scheme} is a tractable deterministic optimization problem as long as the Minkowski difference in (\ref{eq: mpc scheme}f) can be handled efficiently. In practice, $\mC$ is the complementary of the unsafe region $\mC_u$, so (\ref{eq: mpc scheme}f) is equivalent to $\tx_{t+k|t}\notin \mC_u\oplus\mathcal{B}^{n}(r_{\delta,t},0)$. When $\mC_u$ is the union of obstacles in the shapes of ellipsoids and polyhedra, $\mC_u\oplus\mathcal{B}^{n}(r_{\delta,t},0)$ can be efficiently computed, corresponding to similar obstacles of larger size. Thus, \eqref{eq: mpc scheme} can be solved with well-studied obstacle-avoidance control methods for deterministic systems.

Given the MPC scheme, it is important to understand its closed-loop properties. To ensure the feasibility of the optimization problem \eqref{eq: mpc scheme}, we impose the following assumption on the terminal set $\mX_f$ and the control input set $\mU$. 
\begin{assumption} \label{ass: Xf}
 \text{ }
\begin{enumerate}
    \item $\mX_f\in\mC\ominus\mathcal{B}^n\left(r_{\delta,t},0\right)$, $\forall t=0,\dots,T$, and for any $x\in\mX_f$, there exists $u\in\mU$ such that $f(x,u)\in\mX_f$.
    \item For any $x_0\in\mX_0$, there exists at least one sequence $u_t\in\mU$ such that $x_t\in\mC\ominus\mathcal{B}^n\left(r_{\delta,t},0\right)$ hold for any $t=0,\dots,T$, where $x_{t+1}=f(x_t,u_t)$.
\end{enumerate}
\end{assumption}
With this assumption, the following theorem derives the recursive feasibility and $1-\delta$ safety guarantee of the proposed MPC scheme. 

\smallskip
\begin{thm} \label{thm 1} \cite[Thm. 2]{kohler2024predictive}
    Given stochastic system \eqref{sys: d-t ss}, safe set $\mC$, cost functions $\mL_t(\cdot,\cdot)$, terminal cost $\Phi(\cdot)$, initial state set $\mX_0\in\mC$, safe probability level $1-\delta$, and terminal state set $\mX_f$, under Assumptions \ref{ass: Lipschitz f}-\ref{ass: Xf}, the following statements hold: 
    \begin{itemize}
        \item [a)] (Recursive feasibility) Suppose that  problem \eqref{eq: mpc scheme} has a feasible solution $\bm{u}_t^*$ at the time $t$, then there exists a solution $\bm{u}_{t+1}$ of \eqref{eq: mpc scheme} at time $t+1$. 
        \item [b)] (Safety guarantee) Suppose that problem \eqref{eq: mpc scheme} has a feasible solution $\bm{u}^*_0$ at $t=0$, then by choosing $u_t=u_{t|t}^*$, it holds that $\prob{X_t\in\mC,~\forall t\leq T}\geq 1-\delta$.
    \end{itemize}
\end{thm}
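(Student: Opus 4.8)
The plan is to prove the two claims in sequence, since (b) relies on (a). Both parts follow the classical MPC template: recursive feasibility is obtained by exhibiting an explicit feasible candidate at time $t+1$ built from the optimizer $\bm{u}_t^*$ at time $t$, and the safety guarantee is obtained by transferring the deterministic containment of the nominal trajectory in the eroded set into a probabilistic containment of the stochastic trajectory in $\mC$ via the radius of PT \eqref{eq: def PT}.

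For part (a), I would first note the key consistency between the applied control and the nominal prediction: since $u_t=u_{t|t}^*$ and $\tx_{t|t}=x_t$, constraint (\ref{eq: mpc scheme}e) gives $\tx_{t+1|t}=f(x_t,u_{t|t}^*)=x_{t+1}=\tx_{t+1|t+1}$. Hence the nominal prediction at $t+1$ starts exactly where the prediction at $t$ stood one step ahead, and I can define the shifted candidate $\bar u_{t+1+k|t+1}=u_{t+1+k|t}^*$ for $k=0,\dots,m-2$, reusing the tail of $\bm{u}_t^*$. For the final control $\bar u_{t+m|t+1}$ I invoke the terminal invariance in Assumption \ref{ass: Xf}.1: because $\tx_{t+m|t}\in\mX_f$, there is $u\in\mU$ with $f(\tx_{t+m|t},u)\in\mX_f$, and I set $\bar u_{t+m|t+1}=u$. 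The induced candidate nominal trajectory then satisfies $\tx_{t+1+k|t+1}=\tx_{t+1+k|t}$ for $k=0,\dots,m-1$, so the eroded-set constraint (\ref{eq: mpc scheme}f) carries over verbatim (the erosion depth $r_{\delta,t+1+k}$ is indexed by the same absolute time), while the new terminal state lies in $\mX_f\subseteq\mC\ominus\mathcal{B}^n(r_{\delta,t+m+1},0)$ and fulfills (\ref{eq: mpc scheme}g); all candidate inputs lie in $\mU$. This exhibits a feasible $\bm{u}_{t+1}$.

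For part (b), I would use (a) to conclude that \eqref{eq: mpc scheme} remains feasible at every $t=0,\dots,T$, and extract from feasibility the deterministic fact that the \emph{applied} nominal trajectory stays eroded-safe: from the $k=1$ instance of (\ref{eq: mpc scheme}f) at time $t-1$ together with $\tx_{t|t-1}=x_t$, one gets $x_t\in\mC\ominus\mathcal{B}^n(r_{\delta,t},0)$ for all $t\leq T$. By the definition of Minkowski difference this is equivalent to $\mathcal{B}^n(r_{\delta,t},x_t)\subseteq\mC$. I then invoke the defining property \eqref{eq: def PT} of the radius of PT: on the event $\{\|X_t-x_t\|\leq r_{\delta,t},\ \forall t\leq T\}$, which has probability at least $1-\delta$, each $X_t$ lies in $\mathcal{B}^n(r_{\delta,t},x_t)$ and hence in $\mC$ simultaneously for all $t$. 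Monotonicity of probability then yields $\prob{X_t\in\mC,\ \forall t\leq T}\geq 1-\delta$.

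The main obstacle, and the conceptual crux, is that the conclusion is a \emph{trajectory-level} (joint over $t$) guarantee rather than a pointwise one; this is exactly why the set-erosion radius must be the radius of PT \eqref{eq: def PT}, whose defining inequality already bounds the probability that \emph{no} state escapes the tube. The deterministic shift argument in (a) supplies the eroded-safety of the nominal trajectory uniformly in $t$, and it is critical there that the nominal dynamics (\ref{eq: mpc scheme}e) are the noise-free open-loop recursion reinitialized at $x_t$, so that $x_{t+1}=\tx_{t+1|t}$ exactly and the candidate reuses the previous optimizer without any Lipschitz-dependent error accumulating in the nominal channel. The only genuine care required is the bookkeeping of the shrinking prediction window near $t=T$, where the same terminal-invariance argument still produces a valid (shorter) candidate.
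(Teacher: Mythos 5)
Your proposal is correct and follows essentially the same route as the paper: part (a) is the same sliding-window construction (shift the tail of $\bm{u}_t^*$, append a terminal input from the invariance property in Assumption \ref{ass: Xf}, and use $\tx_{t+1|t+1}=x_{t+1}=\tx_{t+1|t}$ so the shifted nominal states inherit constraints (\ref{eq: mpc scheme}f)--(\ref{eq: mpc scheme}g)), and part (b) is the same set-erosion argument combining $x_t\in\mC\ominus\mathcal{B}^n(r_{\delta,t},0)$ with the trajectory-level PT bound \eqref{eq: def PT}. If anything, your indexing of the shifted inputs ($k=0,\dots,m-2$ plus one terminal input) and your explicit invocation of (a) to get feasibility at all $t$ in part (b) are slightly more careful than the paper's write-up.
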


\subsection{Recursive Feasibility and Safety Guarantee}
Recursive feasibility and safety guarantee are the foundation of an MPC scheme in the safe control problem. For completeness, by applying the classical sliding-window technique \cite{rawlings2020model}, we present a proof to show the recursive feasibility of the proposed MPC (Theorem \ref{thm 1}). Based on the recursive feasibility, we prove that the control input generated by the proposed MPC can keep the system \eqref{sys: d-t ss} safe with $1-\delta$ guarantee.

\subsubsection{Proof of Theorem \ref{thm 1}-a)}

    Suppose that problem \eqref{eq: mpc scheme} has a feasible solution $\bm{u}_t^*$ at the time slot $t$. It suffices to construct a feasible solution $\tilde{\bm{u}}_{t+1}$ that satisfies all the constraints of \eqref{eq: mpc scheme} at time $t+1$. Using the sliding-window technique, a candidate solution $\tilde{\bm{u}}_{t+1}$ can be constructed as $\tilde{\bm{u}}_{t+1}=\{u^*_{t+1|t},\dots, u^*_{t+m|t}, \tilde{u}_f\}$, where $\tilde{u}_f$ ensures that $f(x^*_{t+m|t},\tilde{u}_f)\in\mX_f$. Since $\tilde{x}_{t+1|t+1}=x_{t+1}$ always holds, and $z_{t+k|t+1}$ can be always calculated through (\ref{eq: mpc scheme}a-b) given any $\tilde{\bm{u}}_{t+1}$, it is sufficient to prove that the state sequence $\tilde{x}_{t+1+k|t+1}$, $k=1,\dots,m$ controlled by the constructed $\tilde{u}_{t+1}$ satisfies (\ref{eq: mpc scheme}f) and (\ref{eq: mpc scheme}g). Let $\tilde{\bm{x}}_{t}^*=\{\tilde{x}_{t|t}^*,\dots, \tilde{x}_{t+m|t}^*\}$ be the solution of $\tilde{x}$ at $t$, then we know $\tilde{x}_{t+1+k|t+1}=\tilde{x}^*_{t+1+k|t}$, $\forall k=1,\dots,m-1$, and thus they satisfy (\ref{eq: mpc scheme}f). As for the terminal state $\tilde{x}_{t+k+m|t+1}$, it belongs to $\mX_f$ by Assumption \ref{ass: Xf} on $\mX_f$. This completes the proof.   

\subsubsection{Proof of Theorem \ref{thm 1}-b)}
Suppose that problem \eqref{eq: mpc scheme} has feasible solutions $\bm{u}^*_t$ at each time $t=0,\dots,T-1$. Since $u_t=u_{t|t}^*$ is applied to $X_t$ and $x_t$ simultaneously and $x_t$ is noise-free, by (\ref{eq: mpc scheme}f), it holds that 
\begin{equation} \label{eq: safe proof 1}
    x_t\in\mC\ominus \mathcal{B}^n(r_{\delta,t},0), ~\forall t\leq T.
\end{equation}
By the definition of $r_{\delta,t}$ given in \eqref{eq: def PT}, we obtain 
\begin{equation}  \label{eq: safe proof 2}
    \prob{ X_t\in \{x_t\}\oplus\mathcal{B}^n\left(r_{\delta,t},0\right), \forall t\leq T}\geq 1-\delta.
\end{equation}
Combining \eqref{eq: safe proof 1} and \eqref{eq: safe proof 2}, we conclude that $\prob{X_t\in\mC,~\forall t\leq T}\geq 1-\delta$. This completes the proof.

It is clear that Assumption \ref{ass: Xf} is necessary for the proof of Theorem \ref{thm 1}. Essentially, the hold of Assumption \ref{thm 1} relies on a \textit{tight} radius $r_{\delta,t}$ of PT. That is, given all the parameters, the value of $r_{\delta,t}$ should be as small as we can achieve. If $r_{\delta,t}$ is overly conservative, then the eroded set $\mC\ominus\mathcal{B}^n\left(r_{\delta,t},0\right)$ can be restrictive or even empty, making the constraints (\ref{eq: mpc scheme}f-g) infeasible. Therefore, a sufficiently tight $r_{\delta,t}$ is crucial. 

\subsection{A Tight Radius of PT}
In \cite[Theorem 5]{liu2025safety}, a tight radius $r_{\delta,t}$ of the PT is 
\begin{equation} \label{eq: r=}
\begin{cases}
    \sigma\left(\sqrt{\tfrac{1-L^{2t}}{1-L^2}}+\sqrt{\tfrac{L^{-2(\Delta t-1)}-1}{L^{-2}-1}}\right)\sqrt{\varepsilon_1n+\varepsilon_2\log\frac{2T}{\delta \Delta t}},~ L<1, \\
    L^t\sigma\sqrt{\frac{L^{-2T}-1}{L^{-2}-1}(\varepsilon_1n+\varepsilon_2\log(1/\delta))},~ L\geq 1,
\end{cases}
\end{equation}
where $\varepsilon_1=\frac{-\log(1-\varepsilon^2)}{\varepsilon^2},~ \varepsilon_2=\frac{2}{\varepsilon^2}$ for any user-chosen $\varepsilon\in(0,1)$, and $\Delta t\in\mathbb{N}_+$ can be either chosen as a small positive integer or optimized given other parameters.  
Note that $r_{\delta,t}$ in \eqref{eq: r=} is imposed on the entire stochastic trajectory, so its use yields the $1-\delta$ safety guarantee at the trajectory level, rather than at a single time point (e.g., the bound used in \cite{kohler2024predictive}). Moreover, the derived $r_{\delta,t}$ only has an $\mO(\sqrt{\log(1/\delta)})$ dependence on $\delta$, making it particularly superior when $\delta$ is small (e.g., $\delta=10^{-6}$).
Compared to existing works, the $\mO(\sqrt{\log(1/\delta)})$ dependence achieved by $r_{\delta,t}$ represents the tightest known result. 
Additionally, when $L<1$, the term with respect to $L$ is bounded, and $r_{\delta,t}$ only has an $\tmO(\sqrt{\log T})$ dependence on the terminal time $T$, making it effective even if $T$ is large. Therefore, we usually need $L<1$ in practice.

\subsection{Justification on Lipschitz Constants}

One may notice that the adopted bound \eqref{eq: r=} is related to the open-loop Lipschitz constant $L$ given in Assumption \ref{ass: Lipschitz f}. 
However, since $u_t$ generated by the presented MPC is a feedback of $X_t$, the Lipschitz constant of the closed-loop system $X_{t+1}=f(X_t,u(X_t))+w_t$ is no longer equal to $L$ and is difficult to capture. Arising from this difference, whether to use the open-loop or closed-loop Lipschitz constant in $r_{\delta,t}$ remains unclear in existing works \cite{kohler2024predictive}.
In this section, we justify that the bound $r_{\delta,t}$ derived in \eqref{eq: r=} only uses the Lipschitz constant of \textit{the original open-loop system}, and the change of the closed-loop Lipschitz constant is irrelevant.  

Recall that the actual stochastic trajectory $\{X_t\}$ and the nominal trajectory $\{x_t\}$ has the same control input $u_t$ at any time. By Assumption \ref{ass: Lipschitz f}, we know that given a time $t$:
\begin{equation}\label{eq: L X-x}
    \|f(X_t,u_t)-f(x_t,u_t)\|\leq L\|X_t-x_t\|.
\end{equation}
Note that \eqref{eq: L X-x} holds for any $u_t\in\mU$, even if $u_t$ is a feedback control $u(X_t)$ of $X_t$, because the same $u_t$ is applied to the associated $X_t$ and $x_t$ simultaneously. This implies that the evolution of $X_t-x_t$ is \textit{determined} by the open-loop Lipschitz constant $L$, and the Lipschitz constant of the closed-loop system is irrelevant. Then, since $r_{\delta,t}$ defined as \eqref{eq: def PT} is a probabilistic bound on the gap $\|X_t-x_t\|$, its temporal curve is related to the evolution of $X_t-x_t$. Therefore, the expression of $r_{\delta,t}$ contains $L$, and the closed-loop Lipschitz constant, which is irrelevant to the evolution of $X_t-x_t$, does not appear. A detailed derivation of $r_{\delta,t}$ shown in \eqref{eq: r=} can be found in \cite[Section VI]{liu2025safety}.

This justification is crucial for safe MPC design, as it assures that one can directly use the open-loop Lipschitz $L$ constant when implementing the set erosion strategy, without worrying about the implicit effect of the feedback. When the open-loop system \eqref{sys: d-t ss} is given, $L$ can be estimated and tuned by various existing techniques as stated in Remark \ref{remark}.
A linear system is displayed below to further exemplify our conclusion. 

\begin{example}
    Consider the associated linear trajectories:
\begin{equation}
    \begin{split}
        &X_{t+1}=AX_t+Bu(X_t)+w_t \\
        &x_{t+1}=Ax_t+Bu(X_t),
    \end{split}
\end{equation}
where $w_t$ is stochastic and $u(X_t)$ is a nonlinear feedback of $X_t$. The closed-loop dynamics $AX_t+Bu(X_t)$ is nonlinear. Interestingly, it holds that
\begin{equation}
    X_{t+1}-x_{t+1}=A(X_t-x_t)+w_t.
\end{equation}
Let $e_t=X_t-x_t$, then $e_t$ is still driven by the original linear system $e_{t+1}=Ae_t+w_t$. Therefore, for any control law $u(X_t)$, the probabilistic tube around $x_t$ is only determined by $A$ and $w_t$. 
\end{example}

\subsection{Performance Analysis for Lipschitz Cost}
The proposed MPC scheme \eqref{eq: mpc scheme} imposes safety constraints on the nominal trajectory $\{x_t\}$, while $u_t$ is implemented on the actual trajectory $\{X_t\}$ under stochastic disturbance. To understand how the performance of $\{X_t\}$ is related to $\{x_t\}$, we quantify the gap between $J(\bm{X,u})=\mbE\left(\sum_{t=0}^{T-1}\mL_t(X_t,u_t)+\Phi(X_T)\right)$ and $\tilde{J}(\bm{x,u})=\mbE(\sum_{t=0}^{T-1}\mL_t(x_t,u_t)$$+\Phi(x_T))$.
In this section, the cost functions are assumed to be Lipschitz. More general cases are considered as future work.

\begin{thm} \label{thm 2}
Given stochastic system \eqref{sys: d-t ss}, safe set $\mC$, cost functions $\mL_t(\cdot,\cdot)$, terminal cost function $\Phi(\cdot)$, initial state set $\mX_0\in\mC$, safe probability level $1-\delta$, and terminal state set $\mX_f$. Suppose that the cost functions are $L_c$-Lipschitz with both $x$ and $u$, and Assumption \ref{ass: Lipschitz f}-\ref{ass: Xf} hold. Let $u_t$ be the control input of the actual trajectory $\{X_t\}$. Define $J(\bm{X,u})$ and $\tilde{J}(\bm{x,u})$ as above in Section III-E. Then it holds that: 
\begin{equation} \label{eq: thm 2}
\begin{split}
    J(\bm{X,u})-\tilde{J}(\bm{x,u}) 
    \leq \sum_{t=0}^T\sqrt{\frac{nL_c^2\sigma^2(L^{2t}-1)}{L^2-1}}.
\end{split}
\end{equation}
\end{thm}

\begin{proof}
As derived in \cite[Proposition 2]{szy2024Auto}, the evolution of $\mbE(\|X_t-x_t\|^2)$ satisfies:
\begin{equation} \label{eq: EV_t+1<=}
\begin{split}
     \mbE\left(\|X_{t+1}-x_{t+1}\|^2\right)\leq L^2\mbE\left(\|X_{t}-x_{t}\|^2\right)+n\sigma^2
\end{split}
\end{equation}
which has a solution 
\begin{equation} \label{eq: E-bound}
    \mbE(\|X_t-x_t\|^2)\leq \frac{n\sigma^2(L^{2t}-1)}{L^2-1}. 
\end{equation}
 Since $\mL_t(x,u)$ and $\Phi(x)$ are $L_c$-Lipschitz w.r.t. $x$, we know $\mbE(\mL_t(X_t,u_t)-\mL_t(x_t,u_t))\leq L_c\mbE(\|X_t-x_t\|)$,
and the same inequality holds for $\Phi(\cdot)$. By \eqref{eq: E-bound}, we arrive the conclusion of \eqref{eq: thm 2}.
This completes the proof.
\end{proof}
When $L>1$, the term $L^T$ dominates the gap $J(\bm{X,u})-\tilde{J}(\bm{x,u})$, which grows exponentially with respect to $T$. When $L=1$, this gap reduces to the level of $\mO(T^{\frac{3}{2}})$. When $L<1$, the term $\sum_{t=0}^T\sqrt{\frac{nL_c^2\sigma^2(L^{2t}-1)}{L^2-1}}$ converge to some constant, so $J(\bm{X,u})-J^*(\bm{x}^*,\bm{u}^*)$ is on the level of $\mO(T)$. If we take into account the average cost gap $\frac{1}{T}\left(J(\bm{X,u})-J^*(\bm{x}^*,\bm{u}^*)\right)$, then it diverges with $T$ when $L\geq1$ and converges to a finite value when $L<1$. This implies that when $L<1$, the performance of $\{X_t\}$ and $\{x_t\}$ only has a bounded gap under stochastic noise.

\section{Numerical Examples}\label{sec: case}
This section presents two examples to illustrate the proposed stochastic MPC method. The solver we choose for both examples is the Optimization Toolbox in MATLAB.
\subsection{Unicycle}

Consider a unicycle moving on a $2$-dimensional plane with obstacles. The unsafe region $\mC_u$ is the union of red obstacles shown in Figure~\ref{fig: unicycle}, and the safe region is $\mC=\R^n\setminus\mC_u$. Define $d_{min}$ as the minimal width of the path between two circular obstacles. 
The discrete-time system model is given as
\begin{equation}\label{sys: uni}
    \begin{split}
        X_{t+1} = X_t + \eta\begin{bmatrix}
    (v_t^{st}(X_t)+u_{1,t}) \cos(\theta_t)\\
    (v_t^{st}(X_t)+u_{1,t}) \sin(\theta_t)\\
    \omega_t^{st}(X_t)+u_{2,t}
\end{bmatrix} + w_t,
    \end{split}
\end{equation}
where $X_t = \begin{bmatrix}p_{x,t} & p_{y,t} & \theta_t\end{bmatrix}^{\top}$ is the state of the vehicle, $(p_{x,t},p_{y,t})$ is the position of the center of mass of the vehicle in the plane, $\theta_t$ is the heading angle of the vehicle, $\eta=0.1$ is the discretization step size, and  $w_t\sim\mathcal{N}(0,0.02\eta)$ is the stochastic disturbance on the model. The velocity $v_t$ and the angular velocity $\omega_t$ of the unicycle are designed as $v_t = v_t^{st}(X_t)+u_{1,t}$ and $\omega_t = \omega_t^{st}(X_t)+u_{2,t}$,
where $v_t^{st}(X_t)$ and $\omega_t^{st}(X_t)$ are functions of $X_t$ to stabilize the system, and $u_t=\begin{bmatrix} u_{1,t} & u_{2,t} \end{bmatrix}$ is the control input that is generated by the proposed MPC scheme. The expressions of $v_t^{st}(X_t)$ and $\omega_t^{st}(X_t)$ are chosen following \cite{MA-GC-AB-AB:95}, with which the Lipschitz constant of $f$ w.r.t. $X_t$ is estimated to be 0.96 by using the sampling-based method \cite{chuchu2017simulation} over the displayed safe region. The task is to control the unicycle from $x_0=[2.2~3.6~ \frac{\pi}{3}]^\top$ to the origin during $t=0,\dots,30$ with 99.9\% safety guarantee ($\delta=10^{-3}$). The presented MPC scheme is applied with $m=20$, $\mU=\{u|\|u\|_\infty\leq2\}$ and loss functions $\mL_t(x,u)=a\|x\|_1+b\|u\|_1$, $\Phi(x)=a\|x\|$ with constants $a,b$. We sampled 1000 independent stochastic trajectories, and compare their cost $\mL_t(X_t,u_t)$ with $\mbE(\mL_t(X_t,u_t))$ as well as the deterministic cost in the noise-free environment. It is clear that all the sampled trajectories are safe, and the actual costs are close to both the nominal and deterministic costs. Moreover, Figure \ref{fig: unicycle}(c) indicates that without any safety constraints, the actual trajectories can be dangerous when the nominal trajectory is safe but violates the eroded area of the safe set. Finally, Figure \ref{fig: unicycle}(d) shows that our MPC scheme with $r_{\delta,t}$ in \eqref{eq: r=} is feasible with a reasonable safe path width $d_{min}$ when $\delta<10^{-4}$, while that determined by \cite{kohler2024predictive} can be exponentially large. In this simulation environment where $d_{min}=1.04$, our MPC scheme with $r_{\delta,t}$ in \eqref{eq: r=} is feasible for $\delta=10^{-5}$, while that determined by \cite{kohler2024predictive} can lead to infeasible solutions when $\delta\leq0.02$.

\begin{figure}[t] 
	\centering
  \includegraphics[width =0.46\linewidth]{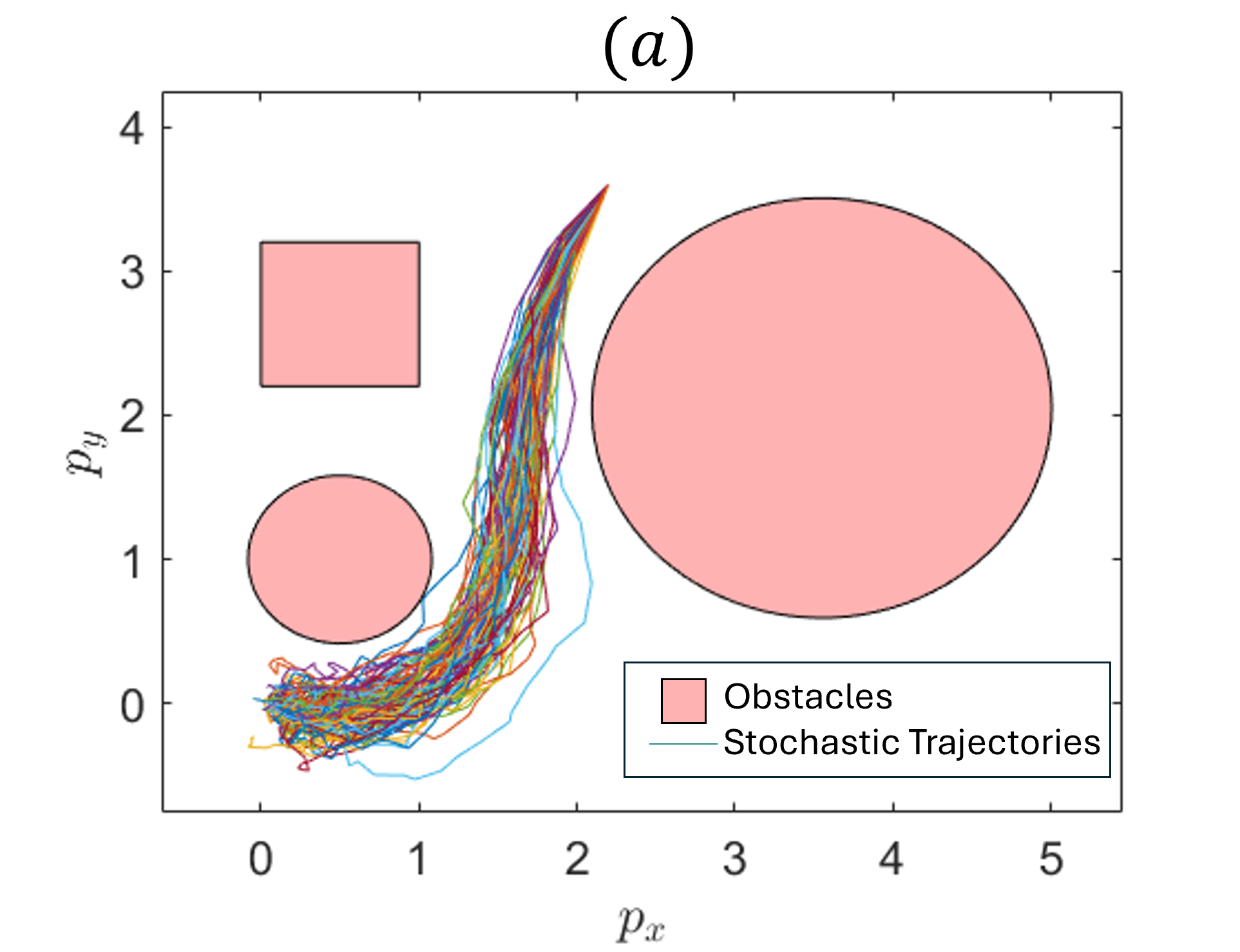}
  \includegraphics[width =0.46\linewidth]{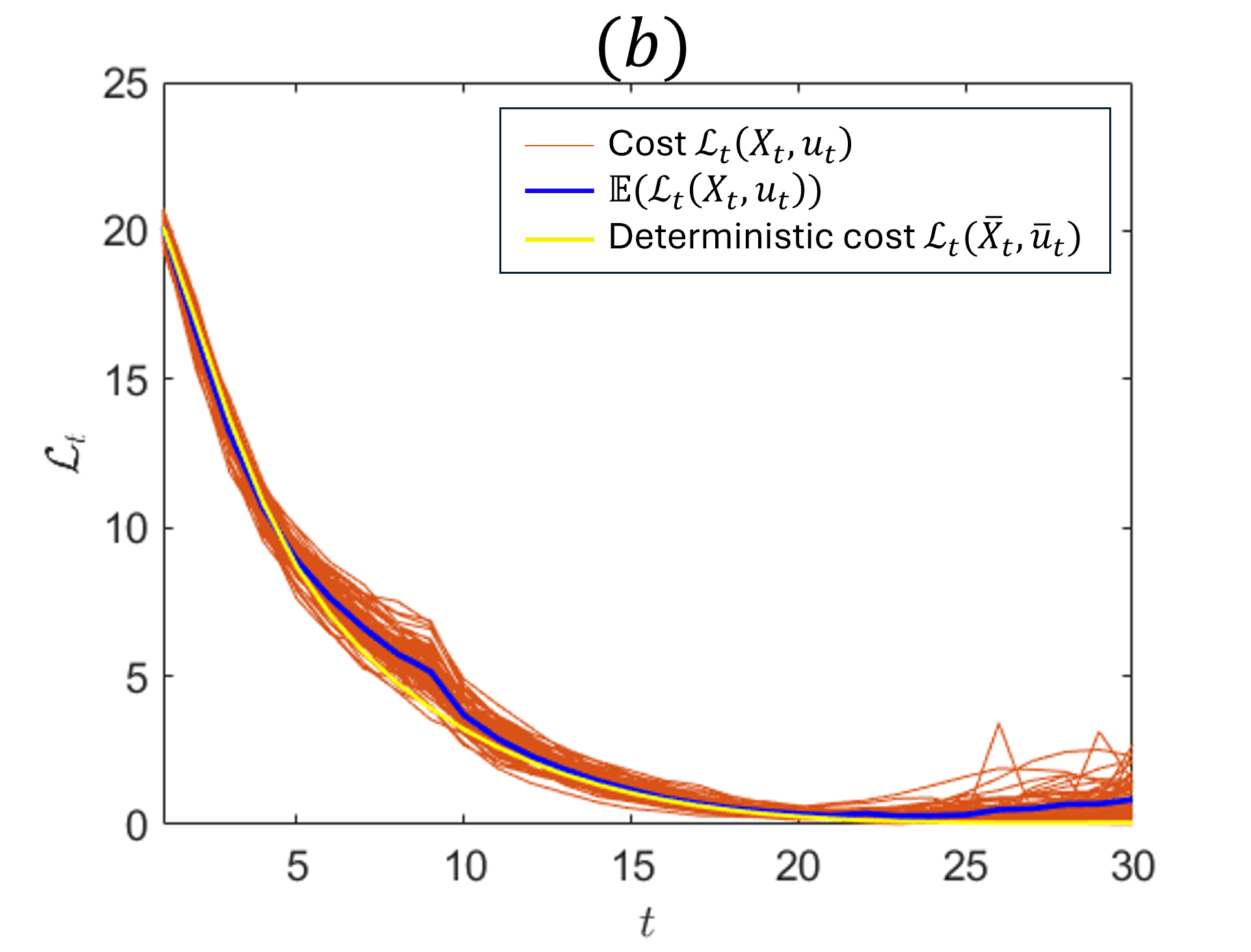}
  \includegraphics[width =0.49\linewidth]{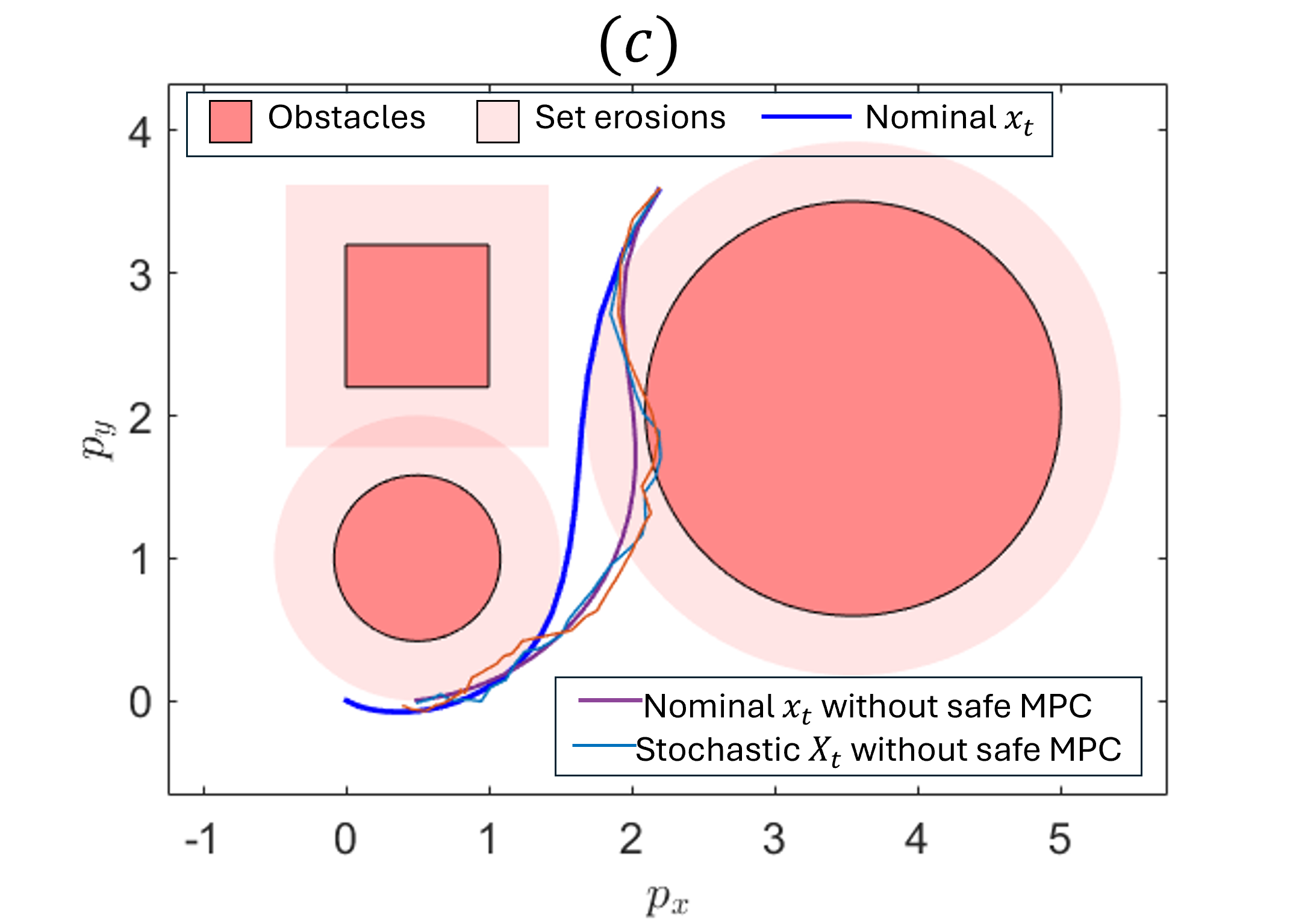}
  \includegraphics[width =0.46\linewidth]{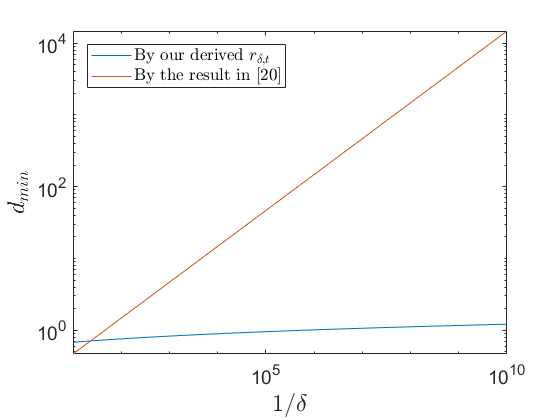}
	\caption{Experiments on a unicycle modeled by \eqref{sys: uni}. (a): 1000 sampled stochastic trajectories of \eqref{sys: uni} under the control of the proposed MPC. (b): Instantaneous cost $\mL_t(X_t,u_t)$ (in red) of the 1000 trajectories in (a), compared with $\mbE(\mL_t(X_t,u_t))$ (in blue) and $\mL_t(\bar{x}_t,\bar{u}_t)$ defined in Section III.E. (c) The nominal trajectory $\{x_t\}$ controlled by the proposed MPC (in blue), which is compared with the eroded safe set (outside the pink areas), the nominal trajectory $\{x_t^{st}\}$ with stabilizers only (in purple), and the associated stochastic trajectories of $x_t^{st}$. 
 }
 \label{fig: unicycle}
 \end{figure}

\subsection{2D Quadrotor}
\begin{figure}[t] 
	\centering
  \includegraphics[width =0.45\linewidth]{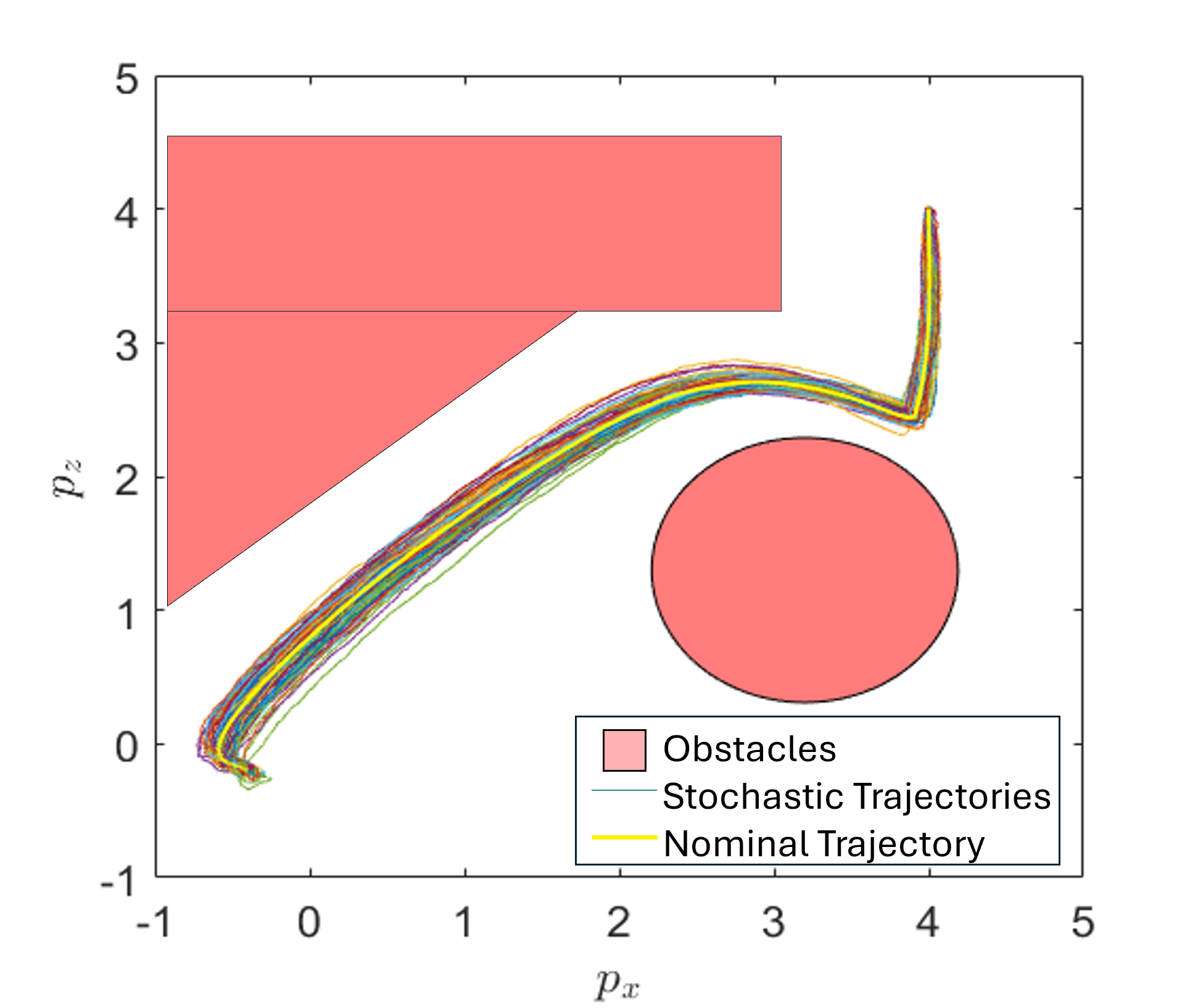}
  \includegraphics[width =0.5\linewidth]{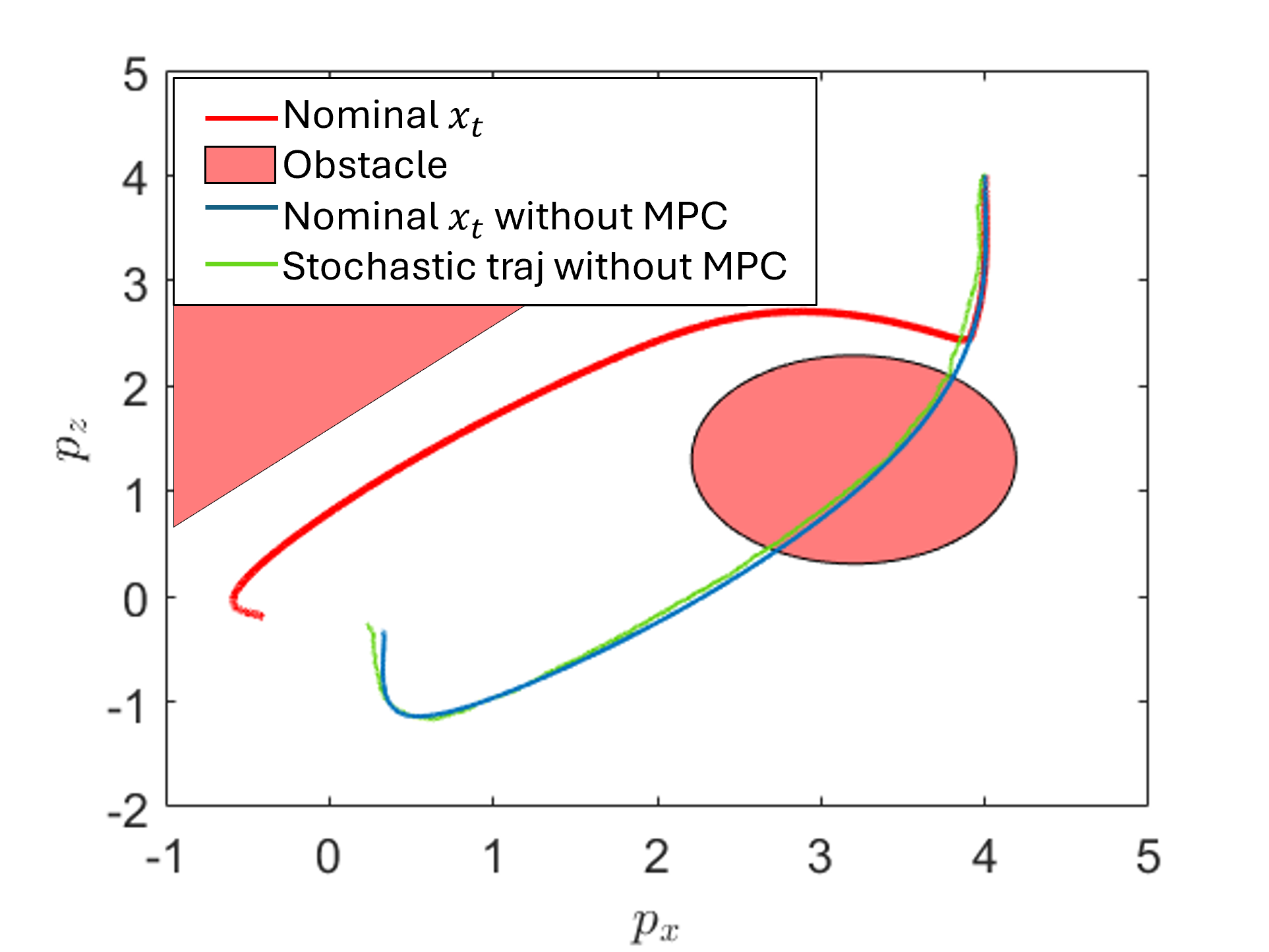}
	\caption{Experiments on the quadrotor. Left: 1000 independent stochastic trajectories of the quadrotor. 
    Right: Comparison between the nominal trajectory $\{x_t\}$ (in red) and that without safety MPC control input (in blue, its associated stochastic trajectory is in green).
 }
 \label{fig: quadrotor}
 \end{figure}

Consider a 6-dimensional planar quadrotor. Let $(p_x,p_z)$ be the position of the quadrotor's center of mass in the plane, $\theta$ be the orientation (tilt angle) of the quadrotor relative to the vertical, $(v_x,v_z)$ be the velocity of the quadrotor in the plane, $\dot{\theta}$ be the angular velocity. 
 Let $X_t=\begin{bmatrix} p_x&p_z&\theta&v_x&v_z&\dot{\theta} \end{bmatrix}^{\top}$ be the state vector, then the state space model is formulated as 
\begin{equation*}\label{sys: PQ}
    \begin{split}
        X_{t+1}=X_t+\eta\begin{bmatrix}
            v_x\cos\theta-v_z\sin\theta \\
            v_x\sin\theta+v_z\cos\theta \\
            \dot{\theta} \\
            v_z\dot{\theta}-g\sin\theta \\
            -v_x\dot{\theta}-g\cos\theta+\frac{K_{1}(X_t)X_t+u_{1,t}}{m_q} \\
            \frac{l}{J}(K_{2}(X_t)X_t+u_{2,t})
        \end{bmatrix}+w_t,
    \end{split}
\end{equation*}
$K_1(X_t),~K_2(X_t)$ are the stabilizers calculated by $X_t$, $u_t=[u_{1,t} ~u_{2,t}]^\top$ is the control input on the propellers, $w_t$ is the stochastic disturbance acting on the system, and $\eta=0.001$ is the discretization step size. We suppose that $g=9.8$ is the gravity, $l=0.25$ is the arm length, $J=0.035$ is the moment of inertia, $m_q=0.141$ is the mass. The bounded input set $\mU=\{u\in\R^2~|~\|u\|_\infty\leq 3\}$, and $w_t\sim\mathcal{N}(0,0.05\eta I_6)$. 

We conduct experiments on the time interval $t=0,\dots,2000$. The task is to use our presented MPC to control the quadrotor trajectory to the origin while avoiding the obstacles with probability at least $99.9\%$ ($\delta=10^{-3}$). The stabilizer $[K_1(X_t),~K_2(X_t)]$ is designed based on GVI-MP proposed in \cite{yu2023stochastic}, which make $L\approx 0.9985$. We set $m=20$ and the cost functions as the quadratic loss. We sampled 1000 independent stochastic trajectories.
All the sampled trajectories are safe. 
Also, the right side of Figure \ref{fig: quadrotor} indicates that with only the stabilizers, the actual trajectories can violate the safety constraint.

\section{Conclusion} \label{sec: conclusion}
We present an MPC framework for stochastic nonlinear systems with a high-probability safety guarantee. The framework is based on the set erosion strategy, which converts the chance constraints on the safe set to deterministic safety constraints of the nominal trajectory on an eroded subset of the safe set. When utilizing this strategy, we adopt a tight bound on the erosion depth of the safe set. Compared to existing works like \cite{kohler2024predictive}, it provides provides safety guarantee for the entire trajectory and significantly improves the feasibility of the proposed MPC when the level of safety guarantee is high. Moreover, we justify that the bound on the set erosion depth depends on the open-loop Lipschitz constant, rather than the closed-loop Lipschitz constant.

\bibliographystyle{ieeetr}
\bibliography{main}    
\end{document}